\newcommand{\bs}{\boldsymbol}
\newtheorem{proposition}{Proposition}
\begin{document}

\title{Information and Energy Cooperation in OFDM Relaying: Protocols and Optimization}
\author{Yuan Liu,~\IEEEmembership{Member,~IEEE}, and Xiaodong Wang,~\IEEEmembership{Fellow,~IEEE}
\thanks{Y. Liu is with the School of Electronic and Information Engineering, South China University of Technology, Guangzhou, P. R. China. Email: eeyliu@scut.edu.cn.}
\thanks{X. Wang is with the Department of Electronic Engineering, Columbia University, New York, USA. Email: wangx@ee.columbia.edu.}
}

\maketitle

\vspace{-1.5cm}

\begin{abstract}
Integrating power transfer into wireless communications for supporting simultaneous wireless information and power transfer (SWIPT) is a promising technique in energy-constrained wireless networks. While most existing work on SWIPT focuses on capacity-energy characterization, the benefits of cooperative transmission for SWIPT are much less investigated. In this paper, we consider SWIPT in an orthogonal frequency-division multiplexing (OFDM) relaying system, where a source node transfers information and a fraction of power simultaneously to a relay node, and the relay node uses the harvested power from the source node to forward the source information to the destination. To support the simultaneous information and energy cooperation, we first propose a transmission protocol assuming that the direct link between the source and destination does not exist, namely power splitting (PS) relaying protocol, where the relay node splits the received signal power in the first hop into two separate parts, one for information decoding and the other for energy harvesting. Then, we consider the case that the direct link between the source and destination is available, and the transmission mode adaptation (TMA) protocol is proposed, where the transmission can be completed by cooperative mode and direct mode simultaneously (over different subcarriers). In direct mode, when the source transmits signal to the destination, the destination receives the signal as information and the relay node concurrently receives the signal for energy harvesting.
%
%At the relay node, the harvested energy from the source node is used as the source of power for information forwarding in all three protocols.
Joint resource allocation problems are formulated to maximize the system throughput. By using the Lagrangian dual method, we develop efficient algorithms to solve the nonconvex optimization problems.
\end{abstract}

\begin{keywords}
  Simultaneous wireless information and power transfer (SWIPT), orthogonal frequency-division multiplexing (OFDM), cooperative relay, resource allocation.
\end{keywords}

\section{Introduction}

\subsection{Motivation and Related Work}

Recently, simultaneous wireless information and power transfer (SWIPT) has become an emerging solution to prolong the lifetime of energy-constrained wireless networks. SWIPT refers to using the same wireless electromagnetic wave to transport energy for both information decoding and energy harvesting at the receiver. With SWIPT, the integrated transceiver designs can be exploited to efficiently use the available wireless resources.

In \cite{Varshney,Grover}, fundamental performance limits of SWIPT were studied from an information-theoretic perspective, and the tradeoff between the channel capacity and the harvested energy at the receiver was characterized. Such capacity-energy tradeoff was also derived for a multiuser system in \cite{Fouladgar}. These papers assume that the received energy can be still harvested after passing through an information decoder, which is not realizable yet due to practical electronic circuit limitations. A practical SWIPT receiver was designed in \cite{ZhouZhang}, where the received signal is split into two separate circuits, one for information decoding and the other for energy harvesting. The works in \cite{ZhangHo} and \cite{Xiang} studied SWIPT in multiple-input-multiple-output (MIMO) channels with perfect and imperfect channel state information (CSI), respectively. In particular, the authors in \cite{ZhangHo} proposed two transmission protocols, namely ``time switching" where the receiver switches between decoding information and harvesting energy, and ``power splitting" where the receiver splits the signal power into two parts for decoding information and harvesting energy. Note that the time switching protocol cannot operate information and power transfer simultaneously. The time switching protocol and the power splitting protocol with power allocation were investigated in \cite{LiuZhang1} and \cite{LiuZhang2}  for capacity-energy tradeoff, respectively.

%The above works are based on direct transmission.
A few attempts have been
made very recently to study wireless information and power transfer in cooperative relay systems \cite{Nasir,Gurakan}. To be specific, the outage probability and the ergodic capacity for time switching and power splitting protocols were derived in \cite{Nasir}. The authors in \cite{Gurakan} considered an energy cooperation scenario where each user shares a portion of the harvested energy with the other users, and the optimization of energy arrivals for throughput maximization using Lagrangian duality was developed. Note that \cite{Gurakan} considered a full-duplex relay which can transmit and receive data/energy simultaneously. In \cite{Ding}, the authors investigated wireless information and power transfer in cooperative networks where the randomly located relays assist one source-destination pair, and outage probability and diversity gain were characterized by stochastic geometry. The authors in \cite{Ding2} studied the outage performance of power strategies at an energy harvesting relay for multiple source-destination pairs. The optimal time-switching ratio was investigated in full-duplex relaying system under different communication modes in \cite{Zhong}.

On the other hand, orthogonal frequency-division multiplexing (OFDM) based relaying transmission is a powerful tool to enable high date rates and has been adopted as a key technique for the next generation communications.
The optimization and resource allocation schemes were proposed for various settings in OFDM relay systems, e.g., single-user single-relay \cite{Hammerstrom,Louveaux,Hsu}, multi-user single relay \cite{Hajiaghayi}, and multi-user multi-relay \cite{Kwak,Ng,YuanTWC10,YuanTCOM12,YuanTWC13,Tao2013}. For OFDM-based SWIPT, power optimization in OFDMA systems for different configurations was studied in \cite{HuangLarsson}. The authors in \cite{Kwan} studied SWIPT in downlink OFDMA systems with power splitting receivers, and suboptimal iterative algorithms were developed to solve the non-convex resource allocation problems. In \cite{XunZhou2014}, the weighted sum-rate maximization problem subject to a minimum harvested energy constraint on each user was studied, under either time switching or power splitting protocol to coordinate energy harvesting and information decoding.

The above OFDM-based SWIPT \cite{HuangLarsson,Kwan,XunZhou2014} focus on traditional direct transmission. The authors in \cite{Xiong} studied SWIPT in MIMO-OFDM relay networks and solved the non-convex resource allocation problems in a suboptimal step-wise way. It is thus observed that SWIPT in OFDM relay systems is not well studied. Using the architecture of OFDM relaying, the potential of SWIPT can be fully explored in various dimensions, including frequency diversity and cooperative diversity by designing efficient resource allocation schemes. It is thus of great importance and necessity to capture the full potential of OFDM relaying for SWIPT. This motivates our study. In this paper, we aim to answer the following questions: What is the transmission protocol to be used for implementing SWIPT in an existing OFDM relay network? How to efficiently allocate wireless resources, such as power and subcarriers to maximize the throughput given the  transmission protocols?

%This paper builds on the prior work on OFDM relay systems. The optimization and resource allocation schemes were proposed for various settings, e.g., single-user single-relay \cite{Hammerstrom,Louveaux,Hsu}, multi-user single relay \cite{Hajiaghayi}, and multi-user multi-relay \cite{Kwak,Ng,YuanTWC10,YuanTCOM12,YuanTWC13}. On the other hand, SWIPT in OFDM relay systems has not been considered yet. Therefore, it is of great importance and necessity to capture the full potential of OFDM relaying for SWIPT.

\subsection{Contributions}

In this paper, we consider a three-node relaying scenario, where a source node transfers a portion of its energy to a decode-and-forward (DF) relay node in return for its assistance in information transmission using OFDM. The main contributions and results of this paper are summarized as follows:
\begin{itemize}
  \item Assuming the direct link between the source and destination is not available, power splitting relaying (PS) protocol for supporting SWIPT in OFDM relaying is considered, where the relay node splits the received signal power in the first hop into two separate parts, one for information decoding and the other for energy harvesting. Then considering the case that the direct link between the source and destination is available, we propose the transmission mode adaptation (TMA) protocol to support SWIPT in OFDM relaying, where the end-to-end transmission can be completed through relay (cooperative mode) and direct source-to-destination transmission (direct mode) simultaneously. For direct mode, when the source transmits signal to the destination, the destination receives the signal as information and the relay node concurrently receives the signal for energy harvesting.
      For both protocols, the relay node harvests power from the source node as the transmit power for information forwarding, which can be regarded as \emph{simultaneous} two-level cooperation, i.e., information-level cooperation and energy-level cooperation, thanks to the parallel structure of OFDM relaying.
  \item For the PS protocol, the joint problem of power allocation and determining power splitting ratio is studied. For the TMA protocol, we consider a joint problem of mode selection (between cooperative and direct modes), subcarrier and power allocation. By using the Lagrangian duality method, we develop efficient algorithms to solve these nonconvex problems.
  \item Simulation results show that for the PS protocol, more resources are allocated to information transfer and less resources for power transfer.  The power allocated to information transfer is increasing when the relay node is closer to the destination; on the other hand, the power allocated to power transfer is decreasing in this case. For TMA protocol, more power should be allocated to cooperative mode when the relay node is located at the middle, and less power is located when the relay node is closer to either the source or destination. The case is reversed for the direct mode. We also show that for both protocols, it is better to place the relay node closer to the source node rather than in the middle between the source and the destination.
\end{itemize}

\subsection{Organization}

The remainder of this paper is organized as follows. Section II describes the system model and the proposed transmission protocols. Section III and Section IV present the joint resource allocation problems for the proposed PS and TMA protocols, respectively. Section V provides extensive simulations. Finally, we conclude the paper in Section VI.

\section{System Model and Transmission Protocols}

\begin{figure}[t]
\begin{centering}
\includegraphics[scale=0.8]{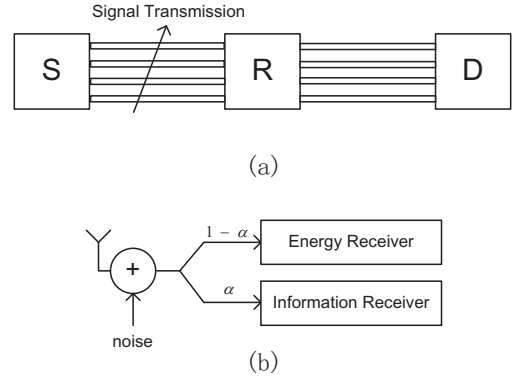}
\vspace{-0.1cm}
 \caption{(a) Illustration of the PS relaying protocol. (b) Diagram of the relay receiver over all subcarriers.}\label{fig:model_PS}
\end{centering}
\vspace{-0.3cm}
\end{figure}
\begin{figure}[t]
\begin{centering}
\includegraphics[scale=0.8]{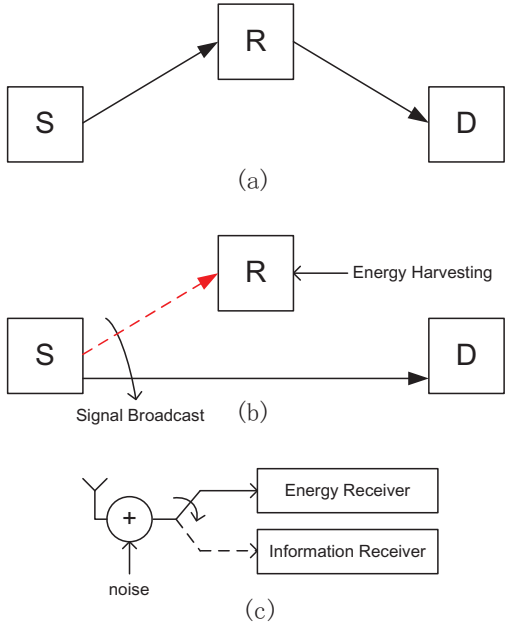}
\vspace{-0.1cm}
 \caption{Illustration of the TMA relaying protocol. (a) Cooperative information transmission mode. (b) Direct information transmission mode. (c) Diagram of the relay receiver.}\label{fig:model_MA}
\end{centering}
\vspace{-0.3cm}
\end{figure}

We consider a three-node relaying system, consisting of a source node $S$, a relay node $R$, and a destination node $D$ as shown in Figs. \ref{fig:model_PS}-\ref{fig:model_MA}. OFDM is used at the physical layer for the two-hop transmission. The relay node is half-duplex, i.e., it cannot receive and transmit simultaneously. The end-to-end transmission is completed by two equal time slots. In the first time slot, the source node transmits signals over OFDM channels while the relay node receives. In the second time slot, the relay node forwards the processed information to the destination using the DF strategy.

Using the parallel structure of OFDM relaying, we first propose the PS transmission protocol to support SWIPT assuming that the direct link between the source and destination does not exist.\footnote{This is due to the impairments such as wireless fading, shadowing, path loss, and obstacles, which make the effects of direct link negligible.}
Fig. \ref{fig:model_PS} depicts the proposed PS protocol.
The signal power on \emph{all} subcarriers is split into two parts at the relay receiver, one for information decoding and one for energy harvesting. Specifically,  the relay receiver determines the optimal power splitting ratio $\alpha$ allocated to the information decoder and $1-\alpha$ to the energy harvester. That is, all subcarriers have the same power splitting ratio. The energy harvested during the first time slot is
used as a source of power for the relay to forward the source information to the destination.

Then, we assume that the direct link between the source and destination is available and propose the TMA protocol in Fig. \ref{fig:model_MA}. By exploring the direct link between the source and destination, the end-to-end information transmission can be completed by two transmission modes simultaneously (over different subcarriers), e.g., cooperative mode (see Fig. \ref{fig:model_MA}(a)) and direct mode (see Fig. \ref{fig:model_MA}(b)). Correspondingly, the OFDM subcarriers in the first time slot are partitioned into two groups for the two transmission modes respectively. In particular, for the direct mode, when the source transmits signal to the destination, the destination receives the signal as information and the relay node concurrently receives the signal for energy harvesting. The harvested energy at the relay node at the subcarriers with direct mode is used as a source of power to forward the source information for cooperative mode. At the relay node (see Fig. \ref{fig:model_MA}(c)), the signal power from the subcarriers with cooperative mode and direct mode are sent to information receiver and energy receiver, respectively.

There are two points about the protocols to be noted. First, for both protocols, the harvested powers from all subcarriers during the first time slot are added up as a total power for information forwarding over all subcarriers in the second time slot. Second, we assume that the information from one group of subcarriers in the first hop can be decoded and re-encoded jointly and then transmitted over a different group of subcarriers in the next hop. This is optimal for DF relaying over multi-channels \cite{Liang}.

It is noticed that for the PS protocol, we let the relay receiver split the signal energy into two parts for all subcarriers with the same power splitting ratio, since the power splitting is performed in analog domain before the digital OFDM demodulation. Moreover, for the TMA protocol, we assume that the relay receiver has ideal bandpass filters so that it is able to tap into different subcarriers. This is also possible in OFDM-based ultra wide band (UWB) system where the ratio of bandwidth and carrier frequency is more than 20\% (according FCC's definition), which makes the bandpass filters not sharp and thus feasible.

The fading is modeled by large-scale path loss, shadowing, and small-scale frequency-selective Rayleigh fading. The transmission from the source to destination is divided into consecutive frames. It is assumed that the fading remains unchanged within each transmission frame but varies from one frame to another. We also assume that perfect CSI is available at all nodes. Without loss of generality, the additive noises at all nodes are independent circular symmetric complex Gaussian random variables, each having zero mean and unit variance.

In this paper, we assume that energy loss and nonlinearity, due to processing and circuitry at the transmit/receive, are neglected for simplicity. The assumption is widely adopted in the prior work \cite{ZhouZhang,ZhangHo,Xiang,LiuZhang1,LiuZhang2,Nasir,Gurakan,XunZhou2014}.

%
%In this section, we study the resource allocation problems corresponding to the two protocols.

\section{Power Splitting Protocol}

In this section, we study the PS protocol and the corresponding joint optimization problem of power allocation and power splitting.

\subsection{Problem Formulation}

Let $p_{s,n}$ denote the source power on subcarrier $n$ in the first time slot.
As stated in \cite{XunZhou2014}, all subcarriers would have the same power splitting ratio in an OFDM-based SWIPT system. Thus we denote $\alpha$ as the fraction of power for information decoding  and the remaining $1-\alpha$ is the fraction of power for energy harvesting.
%
%$\alpha_{n}$ denotes the fraction of power for information decoding  and the remaining $1-\alpha_{n}$ is the fraction of power for energy harvesting on subcarrier $n$.
The harvested power from the source node over all subcarriers in the first time slot is used as the source of transmit power to forward the source information to the destination. Let $h_n$ and $g_n$ are the channel gains over subcarrier $n$ in the first and second hops, respectively. Denote $\mathcal N:=\{1,\cdots,N\}$ as the set of subcarriers, then the transmit power of the source $\{p_{s,n}\}$ and the relay $\{p_{r,n}\}$ should satisfy:
\begin{equation}\label{eqn:ps'}
  \sum_{n\in\mathcal N}p_{s,n}\leq P,
\end{equation}
and
\begin{equation}\label{eqn:pr'}
  \sum_{n\in\mathcal N}p_{r,n}\leq\sum_{n\in\mathcal N}p_{s,n}h_n(1-\alpha).
\end{equation}
In addition, we consider the peak power constraint on each subcarrier\footnote{This is known as ``spectral mask" in OFDM systems introduced by FCC, for considering practical power-amplifier limits and controlling interference to other devices.}:
\begin{eqnarray}\label{eqn:mask2}
  \mathcal P_{{\rm PS}}:=\{0\leq p_{s,n}\leq \overline{p}_{n},
  0\leq p_{r,n}\leq \overline{p}_{n},\forall n\}.
\end{eqnarray}
To avoid trivial solution, we assume that $\sum_{n\in\mathcal N}{\bar p}_n\geq P$.

We can easily obtain the end-to-end achievable rate of the PS protocol as
\begin{align}\label{eqn:RPS}
  R_{{\rm PS}} = \frac{1}{2}\min\Big\{\sum_{n\in\mathcal N}\log(1+p_{s,n}h_n\alpha),
  \sum_{n\in\mathcal N}\log(1+p_{r,n}g_n)\Big\}.
\end{align}

In the PS protocol, our goal is to maximize the system throughput by jointly determining the transmit power and power splitting ratio on each subcarrier. Let $\bs p=\{p_{s,n},p_{r,n}\}$, and the problem can be formulated as
\begin{subequations}\label{eqn:p3}
\begin{align}
\textbf{P1}:~~\max_{\{\bs p, \alpha\}}~&R_{{\rm PS}} \\
{\rm s.t.}~~ &\eqref{eqn:ps'}-\eqref{eqn:RPS}.
\end{align}
\end{subequations}

\subsection{Proposed Solution}

The problem in \textbf{P1} is a mixed integer optimization problem and nonconvex. Finding the optimal solution needs exhaustive search with prohibitive complexity. However, as shown in \cite{Yu}, for the nonconvex optimization problems in multicarrier systems, the duality gap is zero under the ``time-sharing" condition, and the time-sharing condition is always satisfied if the number of subcarriers is large. Therefore, in this paper we solve the resource allocation problems in dual domain.

We first introduce non-negative Lagrangian multipliers $\mu$, $\nu$, $\lambda$, and $\kappa$ rated with the constraints \eqref{eqn:ps'}-\eqref{eqn:RPS}, respectively. The Lagrangian of \textbf{P1} is
\begin{align}\label{eqn:La-PS0}
  \mathcal L_{{\rm PS}}(\lambda,\kappa,\mu,\nu,\bs p,\alpha)=\lambda\left[\frac{1}{2}\sum_{n\in\mathcal N}\log(1+p_{s,n}h_n\alpha)-R_{{\rm PS}}\right]\nonumber\\
  +\kappa\left[\frac{1}{2} \sum_{n\in\mathcal N}\log(1+p_{r,n}g_n)-R_{{\rm PS}}\right]+ R_{{\rm PS}}\nonumber\\
  +\mu\left[P-\sum_{n\in\mathcal N}p_{s,n}\right]
  +\nu\left[\sum_{n\in\mathcal N}p_{s,n}h_n(1-\alpha)-\sum_{n\in\mathcal N}p_{r,n}\right].
\end{align}

Define $\mathcal D_1$ as the set of all primal variables that satisfy the constraints, then the dual function of \textbf{P1} can be expressed as
\begin{equation}
  g_{{\rm PS}}(\lambda,\kappa,\mu,\nu)=\max_{\{\bs p,\alpha\}\in\mathcal D_1}\mathcal L_{{\rm PS}}(\lambda,\kappa,\mu,\nu,\bs p,\alpha).
\end{equation}

Computing the dual function $g_{\rm PS}(\lambda,\kappa,\mu,\nu)$ needs to find the optimal $\{\bs p^*,\alpha^*\}$
%\footnote{It is straightforward that $R_{{\rm PS}}^*$ can be determined if $\{\bs p^*,\alpha^*\}$ are given according to \eqref{eqn:RPS}.}
for the given dual variables $\{\lambda,\kappa,\mu,\nu\}$. In the following, we detail the derivations.

We first investigate the part of the dual function with respect to the rate variable $R_{{\rm PS}}$:
\begin{equation}\label{eqn:g0}
  g_0(\lambda,\kappa)=\max_{R_{{\rm PS}}}~(1-\lambda-\kappa)R_{{\rm PS}}.
\end{equation}

To ensure the dual function to be bounded, we have $1-\lambda-\kappa=0$ such that $g_0(\lambda,\kappa)=0$, which means that $\kappa=1-\lambda$. It is nontrivial that $0\leq\lambda\leq1$ making $\kappa$ non-negative.

By substituting the result to \eqref{eqn:La-PS0}, the Lagrangian becomes
\begin{align}\label{eqn:La-PS}
 \mathcal L_{{\rm PS}}(\lambda,\mu,\nu,\bs p,\alpha)=\frac{\lambda}{2}\sum_{n\in\mathcal N}\log(1+p_{s,n}h_n\alpha) \nonumber\\
 +\frac{1-\lambda}{2}\sum_{n\in\mathcal N}\log(1+p_{r,n}g_n)
 +\mu\left[P-\sum_{n\in\mathcal N}p_{s,n}\right] \nonumber\\
 +\nu\left[\sum_{n\in\mathcal N}p_{s,n}h_n(1-\alpha)-\sum_{n\in\mathcal N}p_{r,n}\right].
\end{align}

Note that when tackling the problem, the power splitting ratio $\alpha$ appears in the achievable rate expression and couples the power allocation on each subcarrier, which results in a nonconvex problem. To this end, we first derive the optimal power allocations for given $\alpha$ in the following proposition.

\begin{proposition}\label{pro:pps}
  For a given power splitting ratio $\alpha$, the optimal power allocations are
  \begin{eqnarray}
  p_{s,n}^*=\begin{cases}
\overline{p}_{n},&{\rm if}~\nu h_n(1-\alpha)\geq\mu\\
  \left[\frac{\lambda}{2(\mu-\nu h_n(1-\alpha))}-\frac{1}{h_n\alpha}\right]
  ^{\overline{p}_{n}}_0,&{\rm otherwise}.
    \end{cases}
  \end{eqnarray}
  \begin{eqnarray}
  p_{r,n}^*=\left[\frac{1-\lambda}{2\nu}-\frac{1}{g_n}\right]^{\overline{p}_{n}}_0.
  \end{eqnarray}
\end{proposition}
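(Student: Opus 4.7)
The plan is to exploit the fact that, for fixed $\alpha$, the Lagrangian $\mathcal{L}_{\mathrm{PS}}$ in \eqref{eqn:La-PS} decouples completely across subcarriers and, within each subcarrier, also separates into an independent problem in $p_{s,n}$ and an independent problem in $p_{r,n}$. Hence maximizing the dual function reduces to $2N$ single-variable box-constrained problems on $[0,\overline{p}_n]$, each of which can be solved by inspection of the KKT conditions.

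First I would handle $p_{r,n}$. The only terms in \eqref{eqn:La-PS} containing $p_{r,n}$ are $\tfrac{1-\lambda}{2}\log(1+p_{r,n}g_n)-\nu p_{r,n}$, which is strictly concave in $p_{r,n}$. Setting the derivative to zero yields the unconstrained stationary point $\tfrac{1-\lambda}{2\nu}-\tfrac{1}{g_n}$; projecting it onto $[0,\overline{p}_n]$ gives exactly the claimed $p_{r,n}^{*}$. This is the standard water-filling step and requires no case analysis beyond the projection.

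Next I would handle $p_{s,n}$. The relevant terms are
\begin{equation}
f_n(p) := \tfrac{\lambda}{2}\log(1+p h_n\alpha) + \bigl(\nu h_n(1-\alpha)-\mu\bigr)p,
\end{equation}
to be maximized over $p\in[0,\overline{p}_n]$. The key observation is the sign of the linear coefficient $\nu h_n(1-\alpha)-\mu$. If $\nu h_n(1-\alpha)\geq\mu$, then $f_n$ is the sum of a nondecreasing concave log term and a nondecreasing linear term, so $f_n$ is monotonically nondecreasing on $[0,\overline{p}_n]$ and is maximized at the upper endpoint, giving $p_{s,n}^{*}=\overline{p}_n$. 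Otherwise $f_n$ is strictly concave and bounded above; the stationary condition $\tfrac{\lambda h_n\alpha}{2(1+p h_n\alpha)}=\mu-\nu h_n(1-\alpha)$ can be solved explicitly, yielding the interior expression $\tfrac{\lambda}{2(\mu-\nu h_n(1-\alpha))}-\tfrac{1}{h_n\alpha}$, which is then clipped to $[0,\overline{p}_n]$. The two cases combine into the piecewise formula stated in the proposition.

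The only subtle point, and the step I expect to take the most care, is the case distinction for $p_{s,n}$: one must recognize that when $\nu h_n(1-\alpha)\geq \mu$ the usual water-filling expression is ill-defined (its denominator vanishes or becomes negative) and that this case corresponds to the Lagrangian reward per unit of $p_{s,n}$ being nonnegative even before the log gain, forcing the boundary solution $\overline{p}_n$. Once this is recognized, the rest is routine differentiation and projection onto the box $[0,\overline{p}_n]$.
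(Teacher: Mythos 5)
Your proof is correct and follows essentially the same route as the paper: per-subcarrier decoupling of the Lagrangian, a sign test on the linear coefficient $\mu-\nu h_n(1-\alpha)$ to separate the boundary case $p_{s,n}^{*}=\overline{p}_n$ from the concave case solved by stationarity, and a standard KKT/projection argument for $p_{r,n}^{*}$. No gaps; your treatment of the degenerate denominator in the water-filling expression is the same observation the paper makes when it notes that $L_{\rm PS}(p_{s,n})$ is increasing whenever $\nu h_n(1-\alpha)\geq\mu$.
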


\begin{proof}
Note that for a fixed $\alpha$, the Lagrangian is concave in the power allocation variables. For each $p_{s,n}$, the corresponding sub-Lagrangian is
\begin{equation}
  L_{{\rm PS}}(p_{s,n})=\frac{\lambda}{2}\log(1+p_{s,n}h_n\alpha)-p_{s,n}(\mu-\nu h_n(1-\alpha)).
\end{equation}
Since $\lambda\geq0$, $L_{{\rm PS}}(p_{s,n})$ is an increasing function of $p_{s,n}$ if $\nu h_n(1-\alpha)\geq\mu$ and thus the optimal solution is $p_{s,n}^*=\overline{p}_{n}$. Otherwise $L_{{\rm PS}}(p_{s,n})$ is a concave function of $p_{s,n}$. By the Karush-Kuhn-Tucker (KKT) conditions \cite{Boyd}, the maximum is achieved at $\frac{\partial L_{{\rm PS}}(p_{s,n})}{\partial p_{s,n}}=0$. $p_{r,n}^*$ directly follows the KKT conditions.
\end{proof}

Substituting the optimal power allocations into the Lagrangian, it is a function of the power splitting ratio $\alpha$, which still turns out to be neither a convex nor a concave function.
As the closed-form solution of the power splitting ratio $\alpha$ seems infeasible, we resort to a numerical method, i.e., we enumerate all possible values of $\alpha$ in the interval $[0, 1]$ with a sufficiently small step-size and select the value of $\alpha^*$ that achieves the maximum of the Lagrangian.

After finding the optimal $\{\bs p^*,\alpha^*\}$, we turn to the dual problem, which can be expressed as
\begin{subequations}
\begin{align}
\min_{\{\lambda,\mu,\nu\}}~&g_{\rm PS}(\lambda,\mu,\nu) \\
{\rm s.t.}~~&0\leq\lambda\leq1,\mu\geq0,\nu\geq0.
\end{align}
\end{subequations}

Since a dual function is always convex by definition, we employ the ellipsoid method \cite{Boyd} to simultaneously update the dual variables $\{\lambda,\mu,\nu\}$ toward the optimal  $\{\lambda^*,\mu^*,\nu^*\}$ by using the following subgradients.
\begin{equation}\label{eqn:delta-lambda0}
  \Delta\lambda = \frac{1}{2}\sum_{n\in\mathcal N}\log(1+p_{s,n}^*h_n\alpha^*)-\frac{1}{2}\sum_{n\in\mathcal N}\log(1+p_{r,n}^*g_n),
\end{equation}
\begin{equation}\label{eqn:delta-mu0}
  \Delta\mu=P-\sum_{n\in\mathcal N}p_{s,n}^*,
\end{equation}
\begin{equation}\label{eqn:delta-nu0}
  \Delta\nu=\sum_{n\in\mathcal N}p_{s,n}^*h_n(1-\alpha^*)-\sum_{n\in\mathcal N}p_{r,n}^*.
\end{equation}

\subsection{Algorithm and Complexity}
\begin{algorithm}[!t]
\caption{Proposed Algorithm for \textbf{P1}}
\begin{algorithmic}[1]
\STATE \textbf{initialize} $\{\lambda,\mu,\nu\}$ as non-negative values.
\REPEAT \STATE Compute the optimal power allocations $\bs p^*(\alpha,\lambda,\mu,\nu)$ using Proposition \ref{pro:pps}. \STATE Find the optimal power splitting ratio $\alpha^*(\lambda,\mu,\nu)$ using numerical search with step-size $\epsilon$. Here $0\leq\epsilon\ll1$ controls accuracy. \STATE Update $\{\lambda,\mu,\nu\}$ by the ellipsoid method using the subgradients defined in \eqref{eqn:delta-lambda0}-\eqref{eqn:delta-nu0}.
 \UNTIL{$\{\lambda,\mu,\nu\}$ converge.}
\end{algorithmic}
\end{algorithm}

We summarize the proposed solution for \textbf{P1} in Algorithm 1. The computational complexity of the ellipsoid method is $\mathcal{O}(q^2)$, where $q$ is the number of the dual variables, and clearly $q=3$ in our paper. Denote $\epsilon$ as the step-size of the search procedure for the optimal $\alpha^*$, the total complexity of the proposed algorithm is $\mathcal{O}(q^2/\epsilon)$.

%\subsection{Performance Upper Bound}
%
%An upper bound for the resource optimization for the PS protocol can be obtained by assuming that the receiver of the relay can split

\section{Mode Adaptation Protocol}

In this section, we study the TMA protocol and the corresponding joint optimization problem of transmission mode selection, subcarrier assignment, and power allocation.

\subsection{Problem Formulation}

Let $p_{sr,n}$ and $p_{sd,n}$ denote the transmit power of the source-to-relay and source-to-destination links on subcarrier $n$ at the first time slot, respectively. $f_n$ represents the channel gain of the source-to-destination link on subcarrier $n$. Binary variables $y_{c,n}$ and $y_{d,n}$ are used for subcarrier assignment at the first time slot, $y_{c,n}=1$ indicates that subcarrier $n$ is used for cooperative mode and $y_{c,n}=0$ otherwise; $y_{d,n}=1$ indicates that subcarrier $n$ is used for direct mode and $y_{d,n}=0$ otherwise. $y_{c,n}$ and $y_{d,n}$ satisfy the following constraint
\begin{equation}\label{eqn:y}
  y_{c,n} + y_{d,n}\leq1,~\forall n\in\mathcal N.
\end{equation}
Then, the transmit power of the source should satisfy the total power constraint:
\begin{equation}\label{eqn:ps3}
  \sum_{n\in\mathcal N}\left(y_{c,n}p_{sr,n}+y_{d,n}p_{sd,n}\right)\leq P.
\end{equation}

The harvested power at the relay node on subcarrier $n$ during the direct mode is $y_{d,n}p_{sd,n}h_n$. The total harvested power is used as a source of power $\{p_{r,n}\}$ to forward information of the source node and satisfy
\begin{equation}\label{eqn:pr3}
  \sum_{n\in\mathcal N}p_{r,n}\leq\sum_{n\in\mathcal N}y_{d,n}p_{sd,n}h_n.
\end{equation}
The power allocations are also constrained by
\begin{eqnarray}\label{eqn:mask3}
  \mathcal P_{{\rm TMA}}:=\{0\leq p_{sr,n}\leq \overline{p}_{n},0\leq p_{sd,n}\leq \overline{p}_{n},\nonumber\\
  0\leq p_{r,n}\leq \overline{p}_{n},\forall n\}.
\end{eqnarray}

The end-to-end transmission rate of the TMA protocol is the sum rates of the direct and cooperative modes:
\begin{align}\label{eqn:RTM}
  R_{{\rm TMA}} = R_d + R_c,
\end{align}
where
\begin{equation}
R_d=\sum_{n\in\mathcal N}y_{d,n}\log(1+p_{sd,n}f_n),
\end{equation}
and
\begin{align}\label{eqn:Rc}
R_c=\frac{1}{2}\min\Big\{\sum_{n\in\mathcal N}y_{c,n}\log(1+p_{sr,n}h_n),\nonumber\\
  \sum_{n\in\mathcal N}y_{c,n}\log(1+p_{r,n}g_n+p_{sr,n}f_n)\Big\}.
\end{align}
Note that in \eqref{eqn:Rc}, we assume that the destination can receive the information from the source to the relay for maximal-ratio combining.

In the TMA protocol, the goal is to maximize the system throughput by jointly allocating power, subcarriers, and transmission modes. Denote $\bs p=\{p_{sr,n},p_{sd,n},p_{r,n}\}$ and $\bs y=\{y_{c,n},y_{d,n}\}$, this problem can be formulated as
\begin{subequations}\label{eqn:p3}
\begin{align}
\textbf{P2}:~~\max_{\{\bs p,\bs y\}}~&R_{{\rm TMA}} \\
{\rm s.t.}~~ &\eqref{eqn:y}-\eqref{eqn:RTM}.
\end{align}
\end{subequations}

\subsection{Proposed Solution}

The problem in \textbf{P2} is a mixed integer optimization problem and nonconvex. To make it more tractable, we first relax the binary variables to real ones, i.e., $0\preceq\bs y\preceq1$, and defining new variables $\bs t =\{t_{c,n},t_{c,n}',t_{d,n}\}$ with $t_{c,n}=y_{c,n}p_{sr,n}$, $t_{c,n}'=y_{c,n}p_{r,n}$ and $t_{d,n}=y_{d,n}p_{sd,n}$.

The continuous relaxation makes $\bs t$ the time sharing factors. Moreover, after the relaxation, \textbf{P2} is a convex problem. We notice that applying conventional software packages directly on \textbf{P2} is not sufficient to solve the original problem \textbf{P2}, since there is no guarantee that they can return the binary $\bs y$.

We can write the  Lagrangian as
\begin{align}\label{eqn:La-TMA}
 &\mathcal L_{{\rm TMA}}(\beta,\gamma,\delta,\bs p,\bs y, \bs t)=\sum_{n\in\mathcal N}y_{d,n}\log\left(1+\frac{t_{d,n}f_n}{y_{d,n}}\right) \nonumber\\
 &+\frac{\beta}{2}\sum_{n\in\mathcal N}y_{c,n}\log\left(1+\frac{t_{c,n}h_n}{y_{c,n}}\right)\nonumber\\
 &+\frac{1-\beta}{2}\sum_{n\in\mathcal N}y_{c,n}\log\left(1+\frac{t_{c,n}'g_n+t_{c,n}f_n}{y_{c,n}}\right)\nonumber\\
 &+\gamma\left[P-\sum_{n\in\mathcal N}\left(t_{c,n}+t_{d,n}\right)\right]
 +\delta\left[\sum_{n\in\mathcal N}t_{d,n}h_n-\sum_{n\in\mathcal N}t_{c,n}'\right],
\end{align}
where $\beta$, $\gamma$, and $\delta$ are non-negative Lagrangian multipliers relating  to the constraints \eqref{eqn:RTM}, \eqref{eqn:ps3}, and \eqref{eqn:pr3}, respectively. Moreover, the constraint $0\leq\beta\leq1$ must hold. The dual function can be expressed as
\begin{equation}
  g_{{\rm TMA}}(\beta,\gamma,\delta)=\max_{\{\bs p,\bs y, \bs t\}\in\mathcal D_2} \mathcal L_{{\rm TMA}}(\beta,\gamma,\delta,\bs p,\bs y, \bs t),
\end{equation}
where $\mathcal D_2$ as the set of all primal variables that satisfy the constraints.

We first get the optimal power allocations in the following proposition.
\begin{proposition}\label{prop:p3'}
  The optimal power allocations of the relaxed \textbf{P2} are
\begin{equation}\label{eqn:psr-opt}
    t_{c,n}^*=y_{c,n}p_{sr,n}^*=y_{c,n}\left[\frac{\beta}{2(\gamma-\delta)}-\frac{1}{h_n}\right]
  ^{\overline{p}_{n}}_0,
  \end{equation}
  \begin{equation}\label{eqn:pr3-opt}
  t_{c,n}'^*=y_{c,n}p_{r,n}^*= y_{c,n}\left[\frac{(1-\beta)f_n}{2\delta g_n}-\frac{p_{sr,n}^*f_n}{g_n}-\frac{1}{g_n}\right]
  ^{\overline{p}_{n}}_0,
  \end{equation}
  and $t_{d,n}^*=y_{d,n}p_{sd,n}^*$ with
  \begin{eqnarray}
    p_{sd,n}^*=\begin{cases}
  \left[\frac{1}{\gamma-\delta h_n}-\frac{1}{f_n}\right]
  ^{\overline{p}_{n}}_0,~&{\rm if}~\gamma\geq\delta h_n\\
  \overline{p}_{n},~&{\rm otherwise},
    \end{cases}
  \end{eqnarray}
  %
%  \begin{equation}
%    t_{d,n}^*=y_{d,n}p_{sd,n}^*=y_{d,n}\left[\frac{1-\beta}{2(\gamma-\delta h_n)}-\frac{1}{f_n}\right]
%  ^{\overline{p}_{n}}_0,
%  \end{equation}
  %
%  \begin{equation}\label{eqn:pr3-opt}
%    p_{r,n}^*=\left[\frac{1-\beta}{2\delta}-\frac{1}{g_n}\right]^{\overline{p}_{n}}_0.
%  \end{equation}
\end{proposition}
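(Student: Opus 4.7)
The plan is to exploit the concavity of the relaxed problem so that the KKT conditions applied to the Lagrangian \eqref{eqn:La-TMA} are both necessary and sufficient for optimality. The key observation is that each term of the form $y\log(1+t/y)$ appearing in $\mathcal L_{\rm TMA}$ is the perspective of the concave function $\log(1+t)$, and is therefore jointly concave in $(y,t)$. Consequently, after the change of variables $t_{c,n}=y_{c,n}p_{sr,n}$, $t_{c,n}'=y_{c,n}p_{r,n}$, $t_{d,n}=y_{d,n}p_{sd,n}$ and the continuous relaxation of $\bs y$, the Lagrangian is concave in $(\bs t,\bs y)$. So to maximize $\mathcal L_{\rm TMA}$ over $\bs t$ for fixed dual variables, I would just set the partial derivatives to zero and project the stationary points onto the box $[0,\bar p_n]$.

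For $p_{sd,n}^{*}$, I would differentiate $\mathcal L_{\rm TMA}$ with respect to $t_{d,n}$ to obtain a stationarity equation of the form $f_n/(1+p_{sd,n}f_n)=\gamma-\delta h_n$. If $\gamma\leq\delta h_n$ the partial derivative is non-negative throughout the feasible interval, so the maximizer is $p_{sd,n}^{*}=\bar p_n$; otherwise, inverting gives the stated water-filling expression, clipped to $[0,\bar p_n]$ by complementary slackness. For $p_{r,n}^{*}$, I would differentiate with respect to $t_{c,n}'$, obtaining the condition $(1-\beta)g_n/\bigl(2[1+p_{r,n}g_n+p_{sr,n}f_n]\bigr)=\delta$, which yields $p_{r,n}^{*}$ in closed form (again clipped to $[0,\bar p_n]$).

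The main obstacle is the derivation of $p_{sr,n}^{*}$, because the maximal-ratio-combining term $p_{sr,n}f_n$ inside the second $\log$ of $R_c$ couples $p_{sr,n}$ with $p_{r,n}$ in the stationarity condition $\partial\mathcal L_{\rm TMA}/\partial t_{c,n}=0$. My plan is to decouple by substituting the stationarity condition for $t_{c,n}'$ just derived, since that condition uniquely pins down the quantity $1+p_{r,n}g_n+p_{sr,n}f_n$ purely in terms of $\beta$, $g_n$, and $\delta$. After this substitution the coupled term collapses and the remaining equation involves $p_{sr,n}$ alone, producing the water-filling expression, which is then projected onto $[0,\bar p_n]$. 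Finally, on subcarriers with $y_{c,n}=0$ or $y_{d,n}=0$ the associated perspective terms vanish, so the products $t_{c,n}^{*}=y_{c,n}p_{sr,n}^{*}$, $t_{c,n}'^{*}=y_{c,n}p_{r,n}^{*}$, and $t_{d,n}^{*}=y_{d,n}p_{sd,n}^{*}$ are zero as required, completing the characterization stated in the proposition.
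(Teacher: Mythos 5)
Your overall route is the same as the paper's: for fixed dual variables, maximize the (concave, after the perspective reformulation and relaxation) Lagrangian subcarrier by subcarrier via stationarity/KKT. Your treatment of $p_{sd,n}^*$ --- derivative nonnegative on the whole box when $\gamma\leq\delta h_n$, hence $p_{sd,n}^*=\overline{p}_n$, otherwise an interior water-filling point --- is exactly the argument the paper gives. Your decoupling idea for $p_{sr,n}^*$ (substitute the stationarity condition in $t_{c,n}'$ into the one in $t_{c,n}$ to collapse the maximal-ratio-combining term) is also the right move, and it supplies precisely the detail the paper hides behind ``easily obtained by applying KKT conditions.''

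The difficulty is that your (correctly stated) stationarity conditions do not land on the formulas in the proposition. From $\tfrac{1-\beta}{2}\cdot\tfrac{g_n}{1+p_{r,n}g_n+p_{sr,n}f_n}=\delta$ you get $p_{r,n}^*=\tfrac{1-\beta}{2\delta}-\tfrac{p_{sr,n}^*f_n}{g_n}-\tfrac{1}{g_n}$, whereas \eqref{eqn:pr3-opt} has $\tfrac{(1-\beta)f_n}{2\delta g_n}$ as its first term; and substituting that same condition into $\tfrac{\beta}{2}\cdot\tfrac{h_n}{1+p_{sr,n}h_n}+\tfrac{1-\beta}{2}\cdot\tfrac{f_n}{1+p_{r,n}g_n+p_{sr,n}f_n}=\gamma$ replaces the second term by $\delta f_n/g_n$, not by $\delta$, so you obtain $p_{sr,n}^*=\big[\tfrac{\beta}{2(\gamma-\delta f_n/g_n)}-\tfrac{1}{h_n}\big]_0^{\overline{p}_n}$ rather than \eqref{eqn:psr-opt}. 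The stated formulas are exactly what one gets by differentiating $\log(1+p_{r,n}g_n+p_{sr,n}f_n)$ with respect to $p_{r,n}$ and picking up $f_n$ instead of $g_n$ in the numerator; so either the proposition carries that slip or your derivation, carried to completion, proves a different statement --- you cannot assert that your equation ``yields the stated expression'' without checking. Two smaller gaps: the substitution trick is only legitimate on subcarriers where $p_{r,n}^*$ is interior (at the box boundary the $t_{c,n}'$ stationarity holds only as an inequality and the two first-order conditions must be resolved jointly), and the $p_{sr,n}$ water-filling level is finite only under a sign condition on the denominator ($\gamma>\delta f_n/g_n$, or $\gamma>\delta$ in the paper's version), which deserves the same explicit case split you gave for $p_{sd,n}$.
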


\begin{proof}
  \eqref{eqn:psr-opt} and \eqref{eqn:pr3-opt} can be easily obtained by applying KKT conditions.
  The Lagrangian with respect to $p_{sd,n}$ can be written as
  \begin{equation}
   L_{d,n}(p_{sd,n})=\log(1+p_{sd,n}f_n)-p_{sd,n}(\gamma-\delta h_n).
  \end{equation}
  Since $L_{d,n}$ is an increasing function of $p_{sd,n}$ if $\gamma<\delta h_n$, thus the maximum of $L_{d,n}$ is at $p_{sd,n}^*=\overline{p}_{n}$. If $\gamma\geq\delta h_n$, $L_{d,n}$ is a concave function of $p_{sd,n}$ and its maximum is at $\frac{\partial L_{d,n}(p_{sd,n})}{\partial p_{sd,n}}=0$.
\end{proof}

We are now ready to find the optimal subcarrier assignment between cooperative mode and direct mode at the first time slot  by the following proposition.

\begin{proposition}\label{pro:mode}
For each subcarrier $n$ in the first hop, the optimal subcarrier assignment is given by
\begin{eqnarray}\label{eqn:y-opt}
  \begin{cases}
    y_{c,n}^*=1~{\rm and}~y_{d,n}^*=0,{\rm if}~J_{c,n}\geq J_{d,n}\\
    y_{c,n}^*=0~{\rm and}~y_{d,n}^*=1,{\rm otherwise,}
  \end{cases}
\end{eqnarray}
where
\begin{align}
J_{c,n}=&\frac{\beta}{2}\log(1+p_{sr,n}^*h_n)+\frac{1-\beta}{2}\log(1+p_{sr,n}^*f_n+p_{r,n}^*g_n)\nonumber\\
&-p_{sr,n}^*\gamma-p_{r,n}^*\delta,
\end{align}
\begin{equation}
J_{d,n}=\log(1+p_{sd,n}^*f_n)-p_{sd,n}^*(\gamma-\delta h_n).
\end{equation}
%$J_{c,n}=\frac{\beta}{2}\log(1+p_{sr,n}^*h_n)-p_{sr,n}^*\gamma$ and $J_{d,n}=\frac{1}{2}\log(1+p_{sd,n}^*f_n)-p_{sd,n}^*(\gamma-\delta h_n)$.
Here $p_{sr,n}^*$, $p_{sr,n}^*$ and $p_{sd,n}^*$ are obtained in Proposition \ref{prop:p3'}.
\end{proposition}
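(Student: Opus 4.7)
The plan is to exploit the standard dual-decomposition structure: after fixing the dual variables $(\beta,\gamma,\delta)$ and substituting the closed-form optimal powers from Proposition~\ref{prop:p3'}, the Lagrangian decouples across subcarriers and becomes linear in the relaxed assignment variables $(y_{c,n},y_{d,n})$. The per-subcarrier subproblem is then a tiny linear program over the simplex $\{y_{c,n}+y_{d,n}\leq 1,\ y_{c,n},y_{d,n}\geq 0\}$, and its optimum is attained at a vertex, which gives the claimed binary selection rule.

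First, I would rewrite the part of $\mathcal L_{{\rm TMA}}$ depending on subcarrier $n$ by using $t_{c,n}=y_{c,n}p_{sr,n}$, $t_{c,n}'=y_{c,n}p_{r,n}$, $t_{d,n}=y_{d,n}p_{sd,n}$. A direct cancellation in each perspective term, e.g.\ $y_{c,n}\log(1+t_{c,n}h_n/y_{c,n})=y_{c,n}\log(1+p_{sr,n}h_n)$, shows that every occurrence of $y_{c,n}$ appears as a scalar multiplier of a quantity that depends only on $p_{sr,n}$ and $p_{r,n}$, and similarly for $y_{d,n}$. Plugging in the optimal powers $p_{sr,n}^{*}$, $p_{r,n}^{*}$, and $p_{sd,n}^{*}$ from Proposition~\ref{prop:p3'}, the coefficient of $y_{c,n}$ collects the three cooperative-mode terms plus the linear power-penalty contributions $-\gamma y_{c,n}p_{sr,n}^{*}-\delta y_{c,n}p_{r,n}^{*}$ from the constraints \eqref{eqn:ps3}--\eqref{eqn:pr3}, which is exactly $J_{c,n}$. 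The coefficient of $y_{d,n}$ collects $\log(1+p_{sd,n}^{*}f_n)-\gamma y_{d,n}p_{sd,n}^{*}+\delta y_{d,n}p_{sd,n}^{*}h_n$, giving $J_{d,n}$.

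Because the maximization of the Lagrangian over $\bs y$ then separates into independent problems
\begin{equation}
\max_{y_{c,n},y_{d,n}}\ J_{c,n}y_{c,n}+J_{d,n}y_{d,n}\quad\text{s.t.}\ y_{c,n}+y_{d,n}\leq 1,\ y_{c,n},y_{d,n}\geq 0,
\end{equation}
the optimum is at one of the three vertices $(1,0)$, $(0,1)$, or $(0,0)$. Discarding the trivial vertex (it is dominated whenever at least one of $J_{c,n},J_{d,n}$ is non-negative, which must hold at the optimal dual solution since otherwise subcarrier $n$ contributes nothing and the total throughput constraint is slack), the rule \eqref{eqn:y-opt} follows by comparing $J_{c,n}$ with $J_{d,n}$. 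As a byproduct, the relaxed LP already returns a binary assignment, so the continuous relaxation is tight and the original mixed-integer problem is solved without further rounding.

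The main obstacle I anticipate is the bookkeeping of the substitution step: one has to verify carefully that every perspective term simplifies as above, that the linear power-penalty terms $-\gamma(t_{c,n}+t_{d,n})+\delta(t_{d,n}h_n-t_{c,n}')$ re-factor cleanly as $y_{c,n}(-\gamma p_{sr,n}^{*}-\delta p_{r,n}^{*})+y_{d,n}p_{sd,n}^{*}(\delta h_n-\gamma)$, and that the boundary case $y_{c,n}=0$ (and hence $t_{c,n}=t_{c,n}'=0$) is handled by the standard perspective convention $0\log(1+0/0)=0$. Once this algebra is in place, the LP/vertex argument is essentially immediate.
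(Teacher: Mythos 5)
Your proposal is correct and follows essentially the same route as the paper: after substituting the optimal powers from Proposition~\ref{prop:p3'}, the per-subcarrier Lagrangian reduces to the linear program $\max\{J_{c,n}y_{c,n}+J_{d,n}y_{d,n}\}$ over the simplex, whose optimum lies at a vertex, yielding the binary rule. Your extra care with the $(0,0)$ vertex (which the paper silently ignores) is a welcome refinement --- in fact it can be dispatched even more directly by noting that $J_{c,n}\geq 0$ and $J_{d,n}\geq 0$ always, since each is the maximized value of a per-subcarrier sub-Lagrangian that equals zero at zero power.
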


\begin{proof}
To maximize the Lagrangian $\mathcal L_{{\rm TMA}}$, the Lagrangian over each subcarrier should be maximized, which can be expressed by
\begin{eqnarray}
    \max_{y_{c,n},y_{d,n}\geq0} &&y_{c,n}J_{c,n} + y_{d,n}J_{d,n}\\
    {\rm s.t.}~~~~&&y_{c,n}+y_{d,n}\leq1.
\end{eqnarray}
Since the objective function has a bounded objective and its maximization over $y_{c,n}$ and $y_{d,n}$ is a linear programming problem, a global optimum can be found at the extreme points of the feasible region \cite{Garfinkel}. Thus we can determine the optimal binary solutions of $y_{c,n}$ and $y_{d,n}$ by exhaustive search over $\{0,1\}$. In other words, we let the maximizer of $J_{c,n}$ and $J_{d,n}$ be active and the remaining one be inactive in each subcarrier $n$ for the exclusive subcarrier assignment.
\end{proof}

It is worth noting that if $J_{c,n}=J_{d,n}$, there exist infinite maximizers of the objective function, but we are only interested in the optimal solution in binary form to recover the primal exclusive subcarrier assignment constraint \eqref{eqn:y}. The binary recovery for the case of $J_{c,n}=J_{d,n}$ may result in that the optimal power allocations obtained in dual domain are not feasible for the primal problem. However, the probability of $J_{c,n}=J_{d,n}$ is zero, since $J_{c,n}$ and $J_{d,n}$ are the functions of $h_n$ and $f_n$ which are the independent continuous random variables. Hence, we can uniquely recover the optimal binary variables $\bs y^*$ for given optimal $\{\beta^*,\gamma^*,\delta^*\}$ with probability 1.

After determining the optimal primal variables $\{\bs p^*,\bs y^*,\bs t^*\}$, the dual problem can be expressed as
\begin{subequations}
\begin{align}
\min_{\{\beta,\gamma,\delta\}}~&g_{\rm TMA}(\beta,\gamma,\delta) \\
{\rm s.t.}~~&0\leq\beta\leq1,\gamma\geq0,\delta\geq0.
\end{align}
\end{subequations}

The optimal dual variables $\{\beta^*,\gamma^*,\delta^*\}$ can be found by using the ellipsoid method according to the following subgradients:
\begin{align}\label{eqn:delta-beta}
  \Delta\beta=&\frac{1}{2}\sum_{n\in\mathcal N}y_{c,n}^*\log\left(1+\frac{t_{sr,n}^*h_n}{y_{c,n}^*}\right)\nonumber\\
  &-\frac{1}{2}\sum_{n\in\mathcal N}y_{c,n}^*\log\left(1+\frac{t_{c,n}'^*g_n+t_{c,n}^*f_n}{y_{c,n}^*}\right),
\end{align}
\begin{equation}\label{eqn:delta-gamma}
  \Delta\gamma=P-\sum_{n\in\mathcal N}\left(t_{sr,n}^*+t_{sd,n}^*\right),
\end{equation}
\begin{equation}\label{eqn:delta-delta}
  \Delta\delta=\sum_{n\in\mathcal N}t_{sd,n}^*h_n-\sum_{n\in\mathcal N}t_{c,n}'^*.
\end{equation}

\subsection{Algorithm and Complexity}
Finally, we present the proposed solution for \textbf{P2} in Algorithm 2. The complexity in Proposition \ref{pro:mode} for all subcarriers is $\mathcal{O}(2N)$. Combining the complexity of dual updating, the total computational complexity is $\mathcal{O}(2Nq^2)$.
\begin{algorithm}[!t]
\caption{Proposed Algorithm for \textbf{P2}}
\begin{algorithmic}[1]
\STATE \textbf{initialize} $\{\beta,\gamma,\delta\}$ as non-negative values.
\REPEAT \STATE Compute the optimal power allocations $\bs p^*(\beta,\gamma,\delta)$ using Proposition \ref{prop:p3'}. \STATE Find the optimal subcarrier assignment $\bs y^*(\beta,\gamma,\delta)$ using \eqref{eqn:y-opt}. \STATE Update $\{\beta,\gamma,\delta\}$ by the ellipsoid method using the subgradients defined in \eqref{eqn:delta-beta}-\eqref{eqn:delta-delta}.
 \UNTIL{$\{\beta,\gamma,\delta\}$ converge.}
\end{algorithmic}
\end{algorithm}

\section{Simulation Results}

In this section, we conduct comprehensive simulations to evaluate the performance of the proposed protocols.

We set the distance between the source $S$ and the destination $D$ is $2$, and the relay node $R$ is located in a line between the source $S$ and the destination $D$. The
Stanford University Interim (SUI)-6 channel model \cite{SUI} is employed for generating OFDM channels and the path-loss exponent is set as $3.5$. The standard deviation of lognormal shadowing is set to be 5.8 dB and the small-scale fading is modeled by Rayleigh fading. The number of the subcarriers is $64$. Moreover, we set $\overline{p}_n=5$dB, $\forall n$. The step-size of Algorithm 1 for searching $\alpha^*$ is set to be $\epsilon=10^{-3}$. In this section, all numerical results are obtained by averaging 1000 channel realizations.

\subsection{Performance of Symmetric Relay}

We first introduce the benchmarks as follows.
\begin{enumerate}
  \item \emph{Performance Upper Bound of Power Splitting (UB-PS):} An upper bound of the PS protocol can be obtained by assuming that the receive of the relay can dynamically determine the power splitting ratio on each subcarrier, i.e., $\alpha_n$ for all $n\in\mathcal N$ instead of $\alpha$. Although the scheme may not be implemented in practice, it actually provides a performance upper bound for the PS protocol. We give the derivations in Appendix \ref{app:UB}.
  \item \emph{Equal Power Splitting (EPS):} For the EPS scheme, the power splitting ratio is set to be $\alpha=\frac{1}{2}$, and the pure power allocation problem is studied using the algorithm for PS protocol. The complexity is $\mathcal{O}(q^2)$.
  \item \emph{Equal Transmission Modes (ETM):} For the ETM scheme, the half of subcarriers in the first time slot is used for cooperative mode and half for direct mode, and the pure power allocation problem is studied using the algorithm for TMA protocol. The complexity is $\mathcal{O}(q^2)$.
\end{enumerate}

In this subsection, we consider the case when relay is located the middle of the source $S$ and the destination $D$. From Fig. \ref{fig:sum-rate}, we observe that the TMA protocol outperforms the PS protocol as expected. The performance of the PS protocol is very close to the upper bound. This suggests that splitting all subcarriers with an identical power ratio is very efficient. In addition, we observe that as SNR increases, the performance of all schemes are bounded at about SNR=25dB, due to the peak power constraint on each subcarrier. Moreover, the gains over the benchmarks illustrate the benefits of adaptive subcarrier assignment and power splitting.

From Fig. \ref{fig:sub_num}, it is found that for the TMA protocol, more subcarriers are used for both cooperative and direct transmission modes when SNR increases. More subcarriers are used for cooperative mode when SNR is low, and more subcarriers are used for direct mode when SNR is greater than about 20 dB.

Fig. \ref{fig:power_portion} shows that the power percentages in the  proposed protocols. For the PS protocol, one observes that the power percentage of information transfer is decreasing with the total power constraint of the source node, and the power percentage of harvested energy at the relay node is decreasing with the source power. For the TMA protocol, the power percentage of cooperative mode is decreasing with the transmit power, and that of direct mode is increasing with the transmit power. Moreover, the transmit power used for information transfer (cooperative mode) is more than for power transfer (direct mode) when SNR is lower than about 20 dB, and the case is reversed when SNR is greater than about 20 dB.

\begin{figure}[t]
\begin{centering}
\includegraphics[scale=.6]{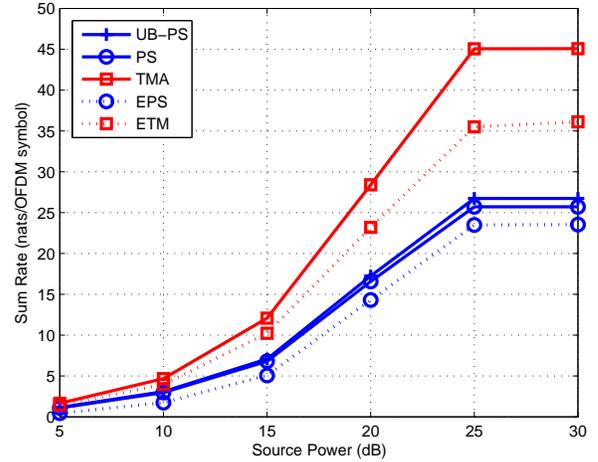}
\vspace{-0.1cm}
 \caption{Sum-rate comparison of the two proposed protocols, where the relay is located at the middle between the source and destination.}\label{fig:sum-rate}
\end{centering}
\vspace{-0.3cm}
\end{figure}
\begin{figure}[t]
\begin{centering}
\includegraphics[scale=.6]{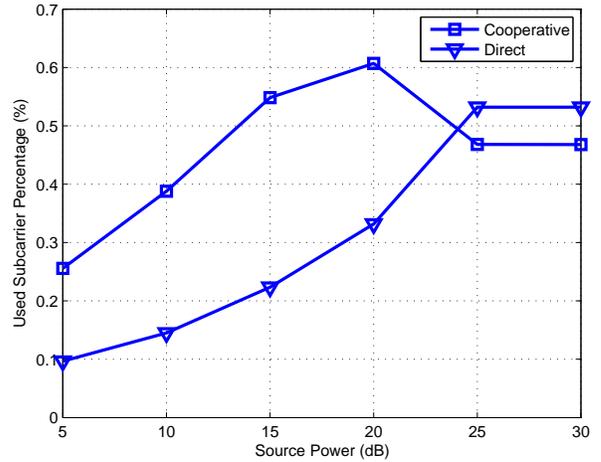}
\vspace{-0.1cm}
 \caption{Percentages of used subcarriers for the proposed TMA protocol, where the source power is fixed as $P=20$ dB.}\label{fig:sub_num}
\end{centering}
\vspace{-0.3cm}
\end{figure}
\begin{figure}[t]
\begin{centering}
\includegraphics[scale=.6]{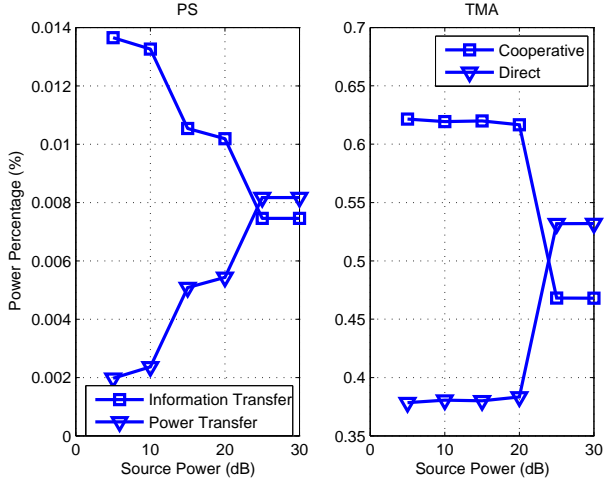}
\vspace{-0.1cm}
 \caption{Power percentages of the two protocols, where the source power is fixed as $P=20$ dB.}\label{fig:power_portion}
\end{centering}
\vspace{-0.3cm}
\end{figure}
%

%%
%\begin{figure}[t]
%\begin{centering}
%\makeatletter\def\@captype{figure}\makeatother
%\subfigure[CG protocol.]{\includegraphics[width=3.5in]{power_portion_CG.eps}
%\label{fig:power_portion_CG}}
%\subfigure[PS protocol.]{\includegraphics[width=3.5in]{power_portion_PS.eps}
%\label{fig:power_portion_PS}} \vspace{-0.1cm} \caption{Power percentages by information and power transfer in the two proposed protocols.} \label{fig:power-percen}
%\end{centering}
%\vspace{-0.3cm}
%\end{figure}
%%

\subsection{Impacts of Relay Location}

In this subsection, we consider the impacts of the relay location for system performance. Here, we fix the source power as $P=20$ dB, and let $d$ as the distance of the source node and the relay node.

In Fig. \ref{fig:sum-rate-relay}, we compare the system throughput of the protocols with different relay locations. It is observed that the TMA protocol outperforms the PS protocol over a wide range of relay locations. In addition, both protocols perform better when the relay node is closer to the source node. Specifically, the TMA protocol has the best performance when the relay node is located at $d=0.3$ and the PS protocol has the best performance when the relay node at about $d=0.5$.

From Fig. \ref{fig:power_portion_relay}, for the PS protocol, we observe that more power is consumed by power transfer when the relay location is closer to the source node, and more power is consumed by information transfer when the relay location is closer to the destination node. For TMA protocol, most power is used for direct mode when the relay location is closer to the source node, and the power percentage is decreasing with the distance of source-to-relay link. Moreover, the direct mode consumes minimum power when the relay is located at the middle between the source and destination.

%Finally, we investigate the percentages of information and power transfer in the proposed CG protocol in Fig. \ref{fig:sub_num-relay}. It is found that when the relay node is closer to the source node, most subcarriers are occupied by information transfer. When about $d\geq0.6$km, the percentage of the subcarriers used for information transfer does not change substantially, but the percentage for power transfer decreases.

%
\begin{figure}[t]
\begin{centering}
\includegraphics[scale=.6]{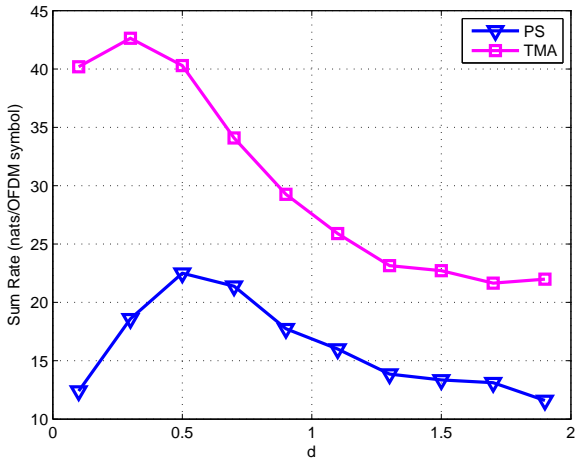}
\vspace{-0.1cm}
 \caption{Sum-rate comparison of the two protocols, where the source power is fixed as $P=20$ dB.}\label{fig:sum-rate-relay}
\end{centering}
\vspace{-0.3cm}
\end{figure}
\begin{figure}[t]
\begin{centering}
\includegraphics[scale=.6]{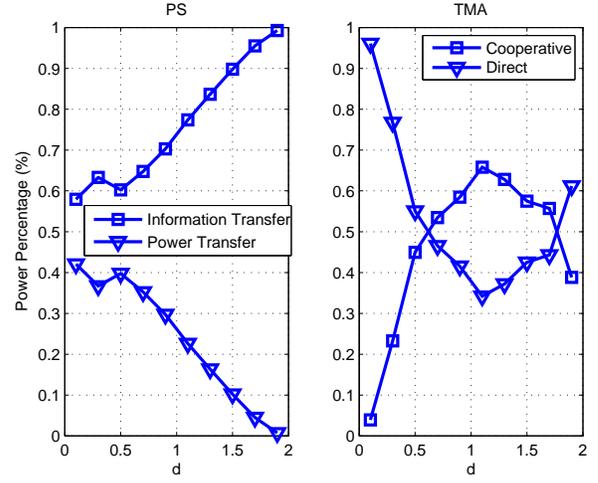}
\vspace{-0.1cm}
 \caption{Power percentages of the protocols, where the source power is fixed as $P=20$ dB.}\label{fig:power_portion_relay}
\end{centering}
\vspace{-0.3cm}
\end{figure}

\section{Conclusion}

In this paper, we studied the wireless information and energy cooperation in OFDM relaying systems. The key idea is that the source node transfers a fraction of power to the relay node in return for its assistance of information transmission. To support such simultaneous information cooperation and energy cooperation, two transmission protocols were proposed. The corresponding nonconvex resource allocation problems within the two protocols were solved with efficient algorithms.

\appendices

\section{Upper Bound of the Power Splitting Protocol}\label{app:UB}

%We first introduce non-negative Lagrangian multipliers $\mu$, $\nu$, and $\lambda$ rated with the constraints \eqref{eqn:ps'}-\eqref{eqn:RPS}, respectively.
Define $\mathcal D_3$ as the set of all primal variables that satisfy the constraints and $\bs\alpha=\{\alpha_n\}$, then the dual function can be expressed as
\begin{equation}
  g_{{\rm PS}}(\lambda,\mu,\nu)=\max_{\{\bs p,\bs\alpha\}\in\mathcal D_3}\mathcal L_{{\rm PS}}(\lambda,\mu,\nu,\bs p,\bs\alpha),
\end{equation}
where the Lagrangian is
\begin{align}\label{eqn:La-PS}
 \mathcal L_{{\rm PS}}(\lambda,\mu,\nu,\bs p,\bs\alpha)=\frac{\lambda}{2}\sum_{n\in\mathcal N}\log(1+p_{s,n}h_n\alpha_{n}) \nonumber\\
 +\frac{1-\lambda}{2}\sum_{n\in\mathcal N}\log(1+p_{r,n}g_n)
 +\mu\left[P-\sum_{n\in\mathcal N}p_{s,n}\right] \nonumber\\
 +\nu\left[\sum_{n\in\mathcal N}p_{s,n}h_n(1-\alpha_{n})-\sum_{n\in\mathcal N}p_{r,n}\right].
\end{align}

Note that in above we have eliminated the rate variable $R_{{\rm PS}}$ by using the similar method as in \eqref{eqn:g0} and $0\leq\lambda\leq1$. The maximization of $\mathcal L_{{\rm PS}}(\lambda,\mu,\nu,\bs p,\bs\alpha)$ can be decoupled into $2N$ subproblems, where $N$ subproblems for the first hop and $N$ subproblems for the second hop:
\begin{equation}\label{eqn:decomp-PS}
  \mathcal L_{{\rm PS}}(\lambda,\mu,\nu,\bs p,\bs\alpha)=\sum_{n\in\mathcal N}L_{{\rm PS},n}^s + \sum_{n\in\mathcal N}L_{{\rm PS},n}^r + \mu P,
\end{equation}
where
\begin{equation}\label{eqn:PSn1}
L_{{\rm PS},n}^s=\frac{\lambda}{2}\log(1+p_{s,n}h_n\alpha_{n})
-p_{s,n}\left[\mu-\nu h_n(1-\alpha_{n})\right],
\end{equation}
\begin{equation}
L_{{\rm PS},n}^r=\frac{1-\lambda}{2}\log(1+p_{r,n}g_n)-\nu p_{r,n}.
\end{equation}

We have the following proposition for deriving the optimal power allocation $\bs p^*$ and optimal power splitting ratio $\bs\alpha^*$.

\begin{proposition}\label{prop:p3}
The optimal solution to the performance upper bound is given by
\begin{enumerate}
  \item $\alpha_{n}^*=0$ and $p_{s,n}^*=\overline{p}_{n}$, if $\lambda\leq2\nu$ and $\nu h_n\geq\mu$,
  \item $\alpha_{n}^*=1$ and $p_{s,n}^*=\left[\frac{\lambda}{2\mu}-\frac{1}{h_n}\right]_0^{\overline{p}_{n}}$, if $\lambda>2\nu$ and $2\nu\overline{p}_{n}h_n\leq\lambda-2\nu$,
%  \item $\alpha_{n}^*=1$ and $p_{s,n}^*=\left[\frac{\lambda}{2\mu}-\frac{1}{h_n}\right]_0^{\frac{\lambda-2\nu}{2\nu h_n}}$, if $\lambda>2\nu$ and $2\nu\overline{p}_{n}h_n\geq\lambda-2\nu$,
  \item $\alpha_{n}^*=\frac{\lambda-2\nu}{2\nu \overline{p}_{n}h_n}$ and $p_{s,n}^*=\overline{p}_{n}$, if $\lambda>2\nu$, $\nu h_n\geq\mu$, and $2\nu\overline{p}_{n}h_n>\lambda-2\nu$.
%  \item $\alpha_{n}^*=1$ and $p_{s,n}^*=\frac{\lambda-2\nu}{2\nu h_n}$, if $\lambda>2\nu$, $\nu h_n<\mu$, and $2\nu\overline{p}_{n}h_n\geq\lambda-2\nu$.
\end{enumerate}
For the relay,
\begin{equation}
p_{r,n}^*=\left[\frac{1-\lambda}{2\nu}-\frac{1}{g_n}\right]^{\overline{p}_{n}}_0.
\end{equation}
\end{proposition}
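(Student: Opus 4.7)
\textbf{Proof sketch for Proposition \ref{prop:p3}.} The starting point is the per-subcarrier decomposition in \eqref{eqn:decomp-PS}, so it suffices to maximize each $L_{\mathrm{PS},n}^s$ over $(p_{s,n},\alpha_n)\in[0,\bar p_n]\times[0,1]$ and each $L_{\mathrm{PS},n}^r$ over $p_{r,n}\in[0,\bar p_n]$ separately. The relay subproblem is concave in $p_{r,n}$ with a box constraint, so the stated $p_{r,n}^*$ follows immediately from the KKT stationarity condition on the unconstrained interior and projection onto $[0,\bar p_n]$.

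The source subproblem is bilinear in $(p_{s,n},\alpha_n)$, which is the source of non-convexity. My plan is to remove the coupling by the substitution $x_n:=p_{s,n}\alpha_n$ (power sent to the information branch) and $y_n:=p_{s,n}(1-\alpha_n)$ (power sent to the energy branch). The feasible set maps bijectively onto the triangle $\{(x_n,y_n):x_n,y_n\ge 0,\ x_n+y_n\le\bar p_n\}$, and \eqref{eqn:PSn1} becomes
\begin{equation}
L_{\mathrm{PS},n}^s=\tfrac{\lambda}{2}\log(1+x_n h_n)-\mu x_n+(\nu h_n-\mu)y_n,
\end{equation}
which is concave in $x_n$ and linear in $y_n$, so it is a jointly concave problem on a polytope.

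The main work is a case split determined by the signs of $\nu h_n-\mu$ and $\lambda-2\nu$, together with whether the interior stationary point is inside $[0,\bar p_n]$. When $\nu h_n\ge\mu$ the coefficient of $y_n$ is non-negative, so the budget constraint is tight: substituting $y_n=\bar p_n-x_n$ reduces the problem to a one-dimensional concave maximization whose stationary point is $x_n=(\lambda-2\nu)/(2\nu h_n)$; if $\lambda\le 2\nu$ this point is $\le 0$ and the optimizer is $x_n^*=0$, giving case 1 ($\alpha_n^*=0$, $p_{s,n}^*=\bar p_n$); if $\lambda>2\nu$ but $(\lambda-2\nu)/(2\nu h_n)\ge\bar p_n$ the optimizer is $x_n^*=\bar p_n$, which together with the $\nu h_n<\mu$ branch produces case 2 via the alternative stationary condition $\lambda h_n/2=\mu(1+x_n h_n)$; and the remaining interior regime yields case 3 with $\alpha_n^*=x_n^*/\bar p_n=(\lambda-2\nu)/(2\nu\bar p_n h_n)$. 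Translating each $(x_n^*,y_n^*)$ back to $(p_{s,n}^*,\alpha_n^*)$ recovers the announced formulas.

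The routine part is the KKT bookkeeping once the substitution is in place. The genuine obstacle is the case enumeration: one must verify that the three regions listed in the proposition are exhaustive modulo zero-measure ties and that the threshold $2\nu\bar p_n h_n\lessgtr\lambda-2\nu$ correctly distinguishes between a boundary hit at $x_n=\bar p_n$ (case 2) and the mixed interior solution (case 3), while simultaneously handling the sign of $\nu h_n-\mu$ that selects whether any power is diverted to the energy branch at all.
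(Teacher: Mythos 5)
Your proof is correct, but it takes a genuinely different route from the paper's. The paper attacks the bilinear objective \eqref{eqn:PSn1} directly: it computes $\partial L_{{\rm PS},n}^s/\partial\alpha_n$, sandwiches it between its values at $\alpha_n=1$ and $\alpha_n=0$, and splits into three cases according to whether that derivative is globally nonpositive, globally nonnegative, or changes sign, handling $p_{s,n}$ separately inside each case by monotonicity or stationarity. Your substitution $x_n=p_{s,n}\alpha_n$, $y_n=p_{s,n}(1-\alpha_n)$ instead maps the feasible rectangle onto the simplex $\{x_n,y_n\ge0,\ x_n+y_n\le\bar p_n\}$ and turns the subproblem into $\tfrac{\lambda}{2}\log(1+x_nh_n)-\mu x_n+(\nu h_n-\mu)y_n$, which is jointly concave, so KKT conditions are necessary \emph{and} sufficient and global optimality comes for free; the paper's stationary-point analysis on a nonconcave function has to lean on monotonicity arguments to reach the same certainty. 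Your case split also reproduces the proposition exactly: the sign of $\nu h_n-\mu$ decides whether $y_n$ saturates the budget, and on the line $y_n=\bar p_n-x_n$ the stationary point $x_n=(\lambda-2\nu)/(2\nu h_n)$ clips at $0$ (case 1), at $\bar p_n$ (case 2, where one checks that $\nu h_n\ge\mu$ and $2\nu\bar p_nh_n\le\lambda-2\nu$ force the water-filling level $\lambda/(2\mu)-1/h_n$ to exceed $\bar p_n$, so the two formulas agree), or lands in the interior (case 3). Two small caveats: when you fold the $\nu h_n<\mu$ branch into case 2 you should note that the proposition's stated conditions for case 2 do not actually cover that branch --- the proposition, like the paper's own proof, silently omits the regimes where $p_{s,n}^*=0$ --- and the change of variables is degenerate at $p_{s,n}=0$ (where $\alpha_n$ is unidentifiable), though neither affects the optimum. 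Your approach buys a cleaner optimality certificate and an automatic exhaustiveness check on the case enumeration; the paper's buys a derivation that stays in the original $(p_{s,n},\alpha_n)$ coordinates.
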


\begin{proof}
By taking the derivatives of $L_{{\rm PS},n}^s$ with respect to $p_{s,n}$ and $\alpha_{n}$, respectively, we have
\begin{equation}\label{eqn:partial-1}
\frac{\partial L_{{\rm PS},n}^s}{\partial p_{s,n}}=\frac{\lambda}{2}\cdot\frac{h_n\alpha_{n}}{1+p_{s,n}h_n\alpha_{n}}-[\mu-\nu h_n(1-\alpha_{n})],
\end{equation}
\begin{equation}\label{eqn:partial-2}
\frac{\partial L_{{\rm PS},n}^s}{\partial \alpha_{n}}=\frac{\lambda}{2}\cdot\frac{p_{s,n}h_n}{1+p_{s,n}h_n\alpha_{n}}
-p_{s,n}h_n\nu.
\end{equation}
%

%By letting \eqref{eqn:partial-1} be zero, the optimal power allocation is given by
%%
%\begin{equation}\label{eqn:p-opt}
%p_{s,n}^*=\left[\frac{\lambda}{2(\mu-\nu h_n(1-\alpha_{n}))}-\frac{1}{h_n\alpha_{n}}\right]_0^{\overline{p}_{n}}.
%\end{equation}

Since $0\leq\alpha_{n}\leq1$, we have
\begin{eqnarray}
  \frac{\lambda}{2}\cdot\frac{p_{s,n}h_n}{1+p_{s,n}h_n}
-p_{s,n}h_n\nu\leq\frac{\partial L_{{\rm PS,n}}^s}{\partial \alpha_{n}}\nonumber\\
\leq\frac{\lambda p_{s,n}h_n}{2}-p_{s,n}h_n\nu.
\end{eqnarray}

We consider the following cases.
\begin{itemize}
\item \textbf{Case 1:} $\frac{\lambda p_{s,n}h_n}{2}-p_{s,n}h_n\nu\leq0$. In this case, $\lambda\leq2\nu$ and $L_{{\rm PS},n}$ is a monotonically decreasing function of $\alpha_{n}$. Thus we conclude that $\alpha_{n}^*=0$. Substituting $\alpha_{n}^*=0$ into \eqref{eqn:PSn1}, we have
    \begin{equation}
      L_{{\rm PS},n}^s=p_{s,n}(\nu h_n-\mu).
    \end{equation}
    To maximize $L_{{\rm PS},n}^s$, the optimal power allocation should be
    \begin{eqnarray}
      p_{s,n}^*=\begin{cases}
        \overline{p}_{n},&{\rm if}~\nu h_n\geq\mu\\
        0,&{\rm otherwise.}
      \end{cases}
    \end{eqnarray}

\item \textbf{Case 2:} $\frac{\lambda}{2}\cdot\frac{p_{s,n}h_n}{1+p_{s,n}h_n}
-p_{s,n}h_n\nu\geq0$. In this case, $\lambda>2\nu$ and $p_{s,n}\leq\frac{\lambda-2\nu}{2\nu h_n}$. Hence $L_{{\rm PS},n}^s$ is a monotonically increasing function of $\alpha_{n}$. This means that $\alpha_{n}^*=1$. Substituting $\alpha_{n}^*=1$ into \eqref{eqn:PSn1} and applying KKT conditions, the optimal power allocation has the form of water-filling:
\begin{equation}
p_{s,n}^*=\left[\frac{\lambda}{2\mu}-\frac{1}{h_n}\right]_0^{\overline{p}_{n}}.
\end{equation}
Combining the above expression with the conditions that $\lambda>2\nu$ and $p_{s,n}\leq\overline{p}_{n}\leq\frac{\lambda-2\nu}{2\nu h_n}$, we obtain $2\nu\overline{p}_{n}h_n\leq\lambda-2\nu$ must hold for this case.

\item \textbf{Case 3:} $\frac{\lambda}{2}\cdot\frac{p_{s,n}h_n}{1+p_{s,n}h_n}
-p_{s,n}h_n\nu<0$ and $\frac{\lambda p_{s,n}h_n}{2}-p_{s,n}h_n\nu>0$. In this case, $\lambda>2\nu$ and $p_{s,n}>\frac{\lambda-2\nu}{2\nu h_n}$, which yields to $2\nu\overline{p}_{n}h_n>\lambda-2\nu$. By letting \eqref{eqn:partial-2} be zero, we have
\begin{equation}
\alpha_{n}^*=\frac{\lambda-2\nu}{2\nu p_{s,n}h_n}.
\end{equation}
Substituting $\alpha_{n}^*$ into \eqref{eqn:PSn1}, we have
\begin{eqnarray}
  L_{{\rm PS},n}^s=p_{s,n}(\nu h_n-\mu)+\frac{\lambda}{2}\log\frac{\lambda}{2\nu}+\nu-\frac{\lambda}{2}.
\end{eqnarray}
To maximize $L_{{\rm PS},n}$, the optimal power allocation should be
\begin{eqnarray}
      p_{s,n}^*=\begin{cases}
        \overline{p}_{n},&{\rm if}~\nu h_n\geq\mu\\
        0,&{\rm otherwise.}
      \end{cases}
\end{eqnarray}
\end{itemize}

Combining the above results, Proposition \ref{prop:p3} is thus proved.

\end{proof}

Take a closer look at Proposition \ref{prop:p3}, it is found that if a subcarrier is used for pure information transfer, i.e., case 2), the optimal power allocation is also water-filling. On the other hand, if a subcarrier is used for pure or partial power transfer, i.e., cases 1) and 3), the optimal power allocation is a constant giving the channel gain is larger than or equal to a threshold $\mu/\nu$.

Then, the dual problem can be expressed as
\begin{subequations}
\begin{align}
\min_{\{\lambda,\mu,\nu\}}~&g_{\rm PS}(\lambda,\mu,\nu) \\
{\rm s.t.}~~&0\leq\lambda\leq1,\mu\geq0,\nu\geq0.
\end{align}
\end{subequations}

Similarly, we use the ellipsoid method to find the optimal dual variables $\{\lambda^*,\mu^*,\nu^*\}$ by using the following subgradients.
\begin{equation}\label{eqn:delta-lambda}
  \Delta\lambda = \frac{1}{2}\sum_{n\in\mathcal N}\log(1+p_{s,n}^*h_n\alpha_{n}^*)-\frac{1}{2}\sum_{n\in\mathcal N}\log(1+p_{r,n}^*g_n),
\end{equation}
\begin{equation}\label{eqn:delta-mu}
  \Delta\mu=P-\sum_{n\in\mathcal N}p_{s,n}^*,
\end{equation}
\begin{equation}\label{eqn:delta-nu}
  \Delta\nu=\sum_{n\in\mathcal N}p_{s,n}^*h_n(1-\alpha_{n}^*)-\sum_{n\in\mathcal N}p_{r,n}^*.
\end{equation}

%\subsection{Algorithm and Complexity}
%
%\begin{algorithm}[!t]
%\caption{Proposed Algorithm for the Performance Upper Bound}
%\begin{algorithmic}[1]
%\STATE \textbf{initialize} $\{\lambda,\mu,\nu\}$ as non-negative values.
%\REPEAT \STATE Compute the optimal power allocations $\bs p^*(\lambda,\mu,\nu)$ and the optimal power splitting ratios $\bs \alpha^*(\lambda,\mu,\nu)$ using Proposition \ref{prop:p3}.  \STATE Update $\{\lambda,\mu,\nu\}$ by the ellipsoid method using the subgradients defined in \eqref{eqn:delta-lambda}-\eqref{eqn:delta-nu}.
% \UNTIL{$\{\lambda,\mu,\nu\}$ converge.}
%\end{algorithmic}
%\end{algorithm}

%We also summarize the proposed solution in Algorithm x.
The complexity of decomposition is $2N$.  Combining the complexity of the ellipsoid method, the total complexity of the algorithm is $\mathcal{O}(2Nq^2)$.

\bibliographystyle{IEEEtran}
\bibliography{IEEEabrv,havest}

% Generated by IEEEtran.bst, version: 1.13 (2008/09/30)
\begin{thebibliography}{10}
\providecommand{\url}[1]{#1}
\csname url@samestyle\endcsname
\providecommand{\newblock}{\relax}
\providecommand{\bibinfo}[2]{#2}
\providecommand{\BIBentrySTDinterwordspacing}{\spaceskip=0pt\relax}
\providecommand{\BIBentryALTinterwordstretchfactor}{4}
\providecommand{\BIBentryALTinterwordspacing}{\spaceskip=\fontdimen2\font plus
\BIBentryALTinterwordstretchfactor\fontdimen3\font minus
  \fontdimen4\font\relax}
\providecommand{\BIBforeignlanguage}[2]{{%
\expandafter\ifx\csname l@#1\endcsname\relax
\typeout{** WARNING: IEEEtran.bst: No hyphenation pattern has been}%
\typeout{** loaded for the language `#1'. Using the pattern for}%
\typeout{** the default language instead.}%
\else
\language=\csname l@#1\endcsname
\fi
#2}}
\providecommand{\BIBdecl}{\relax}
\BIBdecl

\bibitem{Varshney}
L.~R. Varshney, ``Transporting information and energy simultaneously,'' in
  \emph{Proc. IEEE ISIT}, 2008.

\bibitem{Grover}
P.~Grover and A.~Sahai, ``Shannon meets {Tesla}: wireless information and power
  transfer,'' in \emph{Proc. IEEE ISIT}, 2010.

\bibitem{Fouladgar}
A.~Fouladgar and O.~Simeone, ``On the transfer of information and energy in
  multi-user systems,'' \emph{IEEE Comm. Lett.}, vol.~16, pp. 1733--1736, Nov.
  2012.

\bibitem{ZhouZhang}
X.~Zhou, R.~Zhang, and C.~Ho, ``Wireless information and power transfer:
  Architecture design and rate-energy tradeoff,'' \emph{{IEEE} Trans. Commun.},
  vol.~61, no.~11, pp. 4754--4767, Nov. 2013.

\bibitem{ZhangHo}
R.~Zhang and C.~Ho, ``{MIMO} broadcasting for simultaneous wireless information
  and power transfer,'' \emph{IEEE Trans. Wireless Commun.}, vol.~12, no.~5,
  pp. 1989--2001, May 2013.

\bibitem{Xiang}
Z.~Xiang and M.~Tao, ``Robust beamforming for wireless information and power
  transmission,'' \emph{IEEE Wireless Commun. Lett.}, vol.~1, no.~4, pp.
  372--375, Apr. 2012.

\bibitem{LiuZhang1}
L.~Liu, R.~Zhang, and K.~C. Chua, ``Wireless information transfer with
  opportunistic energy harvesting,'' \emph{IEEE Trans. Wireless Commun.},
  vol.~12, no.~1, pp. 288--300, Jan. 2013.

\bibitem{LiuZhang2}
------, ``Wireless information and power transfer: a dynamic power splitting
  approach,'' \emph{IEEE Trans. Commun.}, vol.~61, no.~9, pp. 3990--4001, Sep.
  2013.

\bibitem{Nasir}
A.~A. Nasir, X.~Zhou, S.~Durrani, and R.~A. Kennedy, ``Relaying protocols for
  wireless energy harvesting and information processing,'' \emph{{IEEE} Trans.
  Wireless Commun.}, vol.~12, no.~7, pp. 3622--3636, Jul. 2013.

\bibitem{Gurakan}
B.~Gurakan, O.~Ozel, J.~Yang, and S.~Ulukus, ``Energy cooperation in energy
  harvesting communications,'' \emph{{IEEE} Trans. Commun.}, vol.~61, no.~12,
  pp. 4884--4898, Dec. 2013.

\bibitem{Ding}
Z.~Ding, I.~Krikidis, B.~Sharif, and H.~Poor, ``Wireless information and power
  transfer in cooperative networks with spatially random relays,'' \emph{IEEE
  Trans. Wireless Commun.}, vol.~13, no.~8, pp. 4440--4453, Aug. 2014.

\bibitem{Ding2}
Z.~Ding, S.~Perlaza, I.~Esnaola, and H.~Poor, ``Power allocation strategies in
  energy harvesting wireless cooperative networks,'' \emph{IEEE Trans. Wireless
  Commun.}, vol.~13, no.~2, pp. 846--860, Feb. 2014.

\bibitem{Zhong}
C.~Zhong, H.~Suraweera, G.~Zheng, I.~Krikidis, and Z.~Zhang, ``Wireless
  information and power transfer with full duplex relaying,'' \emph{IEEE Trans.
  Commun.}, vol.~62, no.~10, pp. 3447--3461, Oct. 2014.

\bibitem{Hammerstrom}
I.~Hammerstrom and A.~Wittneben, ``Power allocation schemes for
  amplify-and-forward {MIMO-OFDM} relay links,'' \emph{IEEE Trans. Wireless
  Commun.}, vol.~6, no.~8, pp. 2798--2802, Aug. 2007.

\bibitem{Louveaux}
J.~Louveaux, R.~Duran, and L.~Vandendorpe, ``Efficient algorithm for optimal
  power allocation in {OFDM} transmission with relaying,'' in \emph{Proc. IEEE
  ICASSP}, 2008, pp. 3257--3260.

\bibitem{Hsu}
C.-N. Hsu, H.-J. Su, and P.-H. Lin, ``Joint subcarrier pairing and power
  allocation for {OFDM} transmission with decode-and-forward relaying,''
  \emph{IEEE Trans. Signal Proc.}, vol.~59, no.~99, pp. 399--414, Jan. 2011.

\bibitem{Hajiaghayi}
M.~Hajiaghayi, M.~Dong, and B.~Liang, ``Jointly optimal channel and power
  assignment for dual-hop multi-channel multi-user relaying,'' \emph{{IEEE} J.
  Sel. Areas Commun.}, vol.~30, no.~9, pp. 1806--1814, Oct. 2012.

\bibitem{Kwak}
R.~Kwak and J.~M. Cioffi, ``Resource-allocation for {OFDMA} multi-hop relaying
  downlink systems,'' in \emph{Proc. IEEE GLOBECOM}, 2007.

\bibitem{Ng}
T.~C.-Y. Ng and W.~Yu, ``Joint optimization of relay strategies and resource
  allocations in cooperative cellular networks,'' \emph{IEEE J. Sel. Area
  Commun.}, vol.~25, no.~2, pp. 328--339, Feb. 2007.

\bibitem{YuanTWC10}
Y.~Liu, M.~Tao, B.~Li, and H.~Shen, ``Optimization framework and graph-based
  approach for relay-assisted bidirectional {OFDMA} cellular networks,''
  \emph{IEEE Trans. Wireless Commun.}, vol.~9, no.~11, pp. 3490--3500, Nov.
  2010.

\bibitem{YuanTCOM12}
Y.~Liu and M.~Tao, ``Optimal channel and relay assignment in {OFDM}-based
  multi-relay multi-pair two-way communication networks,'' \emph{IEEE Trans.
  Commun.}, vol.~60, no.~2, pp. 317--321, Feb. 2012.

\bibitem{YuanTWC13}
Y.~Liu, J.~Mo, and M.~Tao, ``{QoS}-aware transmission policies for {OFDM}
  bidirectional decode-and-forward relaying,'' \emph{{IEEE} Trans. Wireless
  Commun.}, vol.~12, no.~5, pp. 2206--2216, May 2013.

\bibitem{Tao2013}
M.~Tao and Y.~Liu, ``A network flow approach to throughput maximization in
  cooperative {OFDMA} networks,'' \emph{{IEEE} Trans. Wireless Commun.},
  vol.~12, no.~3, pp. 1138--1148, Mar. 2013.

\bibitem{HuangLarsson}
K.~Huang and E.~Larsson, ``Simultaneous information and power transfer for
  broadband wireless systems,'' \emph{IEEE Trans. Signal Proc.}, vol.~61,
  no.~23, pp. 5972-- 5986, Dec. 2013.

\bibitem{Kwan}
D.~W.~K. Ng, E.~S. Lo, and R.~Schober, ``Wireless information and power
  transfer: Energy efficiency optimization in {OFDMA} systems,'' \emph{{IEEE}
  Trans. Wireless Commun.}, vol.~12, no.~12, pp. 6352--6370, Dec. 2013.

\bibitem{XunZhou2014}
X.~Zhou, R.~Zhang, and C.~K. Ho, ``Wireless information and power transfer in
  multiuser {OFDM} systems,'' \emph{{IEEE} Trans. Wireless Commun.}, vol.~13,
  no.~4, pp. 2282--2294, Apr. 2014.

\bibitem{Xiong}
K.~Xiong, P.~Fan, C.~Zhang, and K.~Letaief, ``Wireless information and energy
  transfer for two-hop non-regenerative {MIMO-OFDM} relay networks,''
  \emph{IEEE J. Sel. Area Commun.}, 2015, to appear.

\bibitem{Liang}
Y.~Liang, V.~V. Veeravalli, and H.~V. Poor, ``Resource allocation for wireless
  fading relay channels: Max-min solution,'' \emph{IEEE Trans. Inf. Theory},
  vol.~53, no.~10, pp. 3432--3453, Oct. 2007.

\bibitem{Yu}
W.~Yu and R.~Lui, ``Dual methods for nonconvex spectrum optimization of
  multicarrier systems,'' \emph{IEEE Trans. Commun.}, vol.~54, no.~7, pp.
  1310--1322, Jul. 2006.

\bibitem{Boyd}
S.~Boyd and L.~Vandenberghe, \emph{Convex Optimization}.\hskip 1em plus 0.5em
  minus 0.4em\relax Cambridge University Press, 2004.

\bibitem{Garfinkel}
R.~S. Garfinkel and G.~L. Nemhauser, \emph{Integer Programming}.\hskip 1em plus
  0.5em minus 0.4em\relax New York: Wiley, 1972.

\bibitem{SUI}
V.~Erceg, K.~Hari, M.~Smith, D.~Baum, and et~al, \emph{Channel Models for Fixed
  Wireless Applications}, IEEE 802.16.3c-01/29r1 Std., Feb. 2001.

\end{thebibliography}

\end{document}